\newtheorem{assumption}{Assumption}
\newtheorem{remark}{Remark}
\newtheorem{theorem}{Theorem}
\newtheorem{lemma}{Lemma}
\newtheorem{example}{Example}
\title{\LARGE \bf Learning Equilibrium with Estimated Payoffs in Population Games}%
\author{Shinkyu Park
  \thanks{This work was supported by funding from King Abdullah University of Science and Technology (KAUST).}
  \thanks{The author is with Electrical and Computer Engineering, King Abdullah University of Science and Technology (KAUST), Thuwal, 23955-6900, Kingdom of Saudi Arabia. {\tt shinkyu.park@kaust.edu.sa}} \\
}
\begin{document}

\maketitle

\begin{abstract}
  We study a multi-agent decision problem in population games, where agents select from multiple available strategies and continually revise their selections based on the payoffs associated with these strategies. Unlike conventional population game formulations, we consider a scenario where agents must estimate the payoffs through local measurements and communication with their neighbors. By employing task allocation games -- dynamic extensions of conventional population games -- we examine how errors in payoff estimation by individual agents affect the convergence of the strategy revision process. Our main contribution is an analysis of how estimation errors impact the convergence of the agents' strategy profile to equilibrium. Based on the analytical results, we propose a design for a time-varying strategy revision rate to guarantee convergence. Simulation studies illustrate how the proposed method for updating the revision rate facilitates convergence to equilibrium.
\end{abstract}

\section{Introduction} \label{sec:introduction}
Large-population game frameworks provide a systematic approach to studying the decision-making processes of multiple agents engaged in repeated strategic interactions. These frameworks find applications across a wide range of fields, including road congestion \cite{8721089, LI2023110879}, communication systems \cite{5674046, Tembine_2008}, distributed control systems \cite{7823106}, and multi-robot task allocation \cite{9561809, 8439076, 10.1007/978-3-030-20915-5_54}, among others. In this work, we investigate a decision-making model within the population game framework \cite{Sandholm2010-SANPGA-2}, where agents utilize the model to select a strategy among a finite set of available options. This model consists of a \textit{learning rule} (also referred to as a \textit{revision protocol}) and \textit{stochastic alarm clock}, defining how individual agents repeatedly revise their strategy selections, with the primary goal of learning the best strategies.
Under this model, the agent strategy selection is influenced by  information about payoffs of available strategies. Within the standard framework, it is generally assumed that all agents have perfect knowledge of these payoffs.

However, in many engineering applications, such as decentralized control systems \cite{6425819}, where sensing and decision-making are decentralized, this assumption requires well-established communication channels to ensure that all agents have complete knowledge of the payoffs. Otherwise, agents must estimate the payoffs based on local observations and communication with neighbors, and then make decisions on strategy selection based on their own payoff estimates. In this context, limited communication between agents leads to non-negligible estimation errors. This paradigm challenges one of the key assumptions of the standard population game formalism, which is critical for establishing convergence results \cite{Sandholm2010-SANPGA-2}.

In this paper, we adopt task allocation games \cite{9561809, 10383344} and analyze the effect of errors in payoff estimation on the convergence of the agent strategy revision process. These games can be viewed as dynamic extensions of conventional population games, with potential applications in multi-robot systems research as demonstrated in \cite{9561809}. In task allocation games, agents choose strategies to complete a set of predefined tasks, where the payoff for each strategy is determined by the amount of jobs remaining in its associated task. Unlike their conventional counterparts, the game has its own state, governed by a dynamical system model, to represent the remaining jobs for each task.
In such dynamic game settings, if agents have limited capabilities for observing the game's state, they must estimate it and use this estimate to infer expected payoffs and select one of available strategies.

Relevant to our present work where agents are communicating with their neighbors defined by a graph for payoff estimation, the literature on games over graphs, as exemplified by \cite{SZABO200797}, studies the long-term strategic interactions of multiple agents when social networks, represented by graphs, restrict the interactions between them. The study by \cite{BaObQu2017} proposes a new population dynamics framework designed to model distributed information structures in population games.
In this framework, the strategic decision-making of multiple agents is represented by a dynamical system model, concisely expressed as an ordinary differential equation (ODE) on a graph. Using this ODE model, the authors examine convergence of population dynamics.

As part of the series of works, the author of  \cite{2f3e8fce-f2ca-35f7-af3c-454ab8ea0188} proposes a framework to rigorously formulate the agent decision-making process in scenarios where each agent has limited access to information about the underlying game. This limitation results in noisy evaluations of payoffs for available strategies and limited knowledge of other agents' strategy selections. The work also evaluates equilibrium states of the process to predict the long-term behavior of the agents' decision-making. The work of \cite{FOSTER1990219} discusses analytical methods for assessing the long-term behavior of agents' noisy decision-making processes, described by stochastic dynamical system models. Notably, the main results illustrate how a \textit{stochastically stable equilibrium} emerges as the noise level in the agent decision-making process approaches zero.

In our work, we analyze how payoff estimation errors affect the convergence of the agent decision-making process in task allocation games. Unlike previous studies that primarily focus on analyzing the impact of errors on the decision-making process, we propose a method to eliminate this influence in our problem setting.
Specifically, we rigorously prove that with a decreasing strategy revision rate, agents’ decision-making process effectively mitigates the effects of estimation errors, allowing them to asymptotically learn and attain the equilibrium corresponding to the optimal strategy selection. Based on this analysis, we propose a design for a stochastic alarm clock to ensure convergence. Our technical contributions are summarized as follows:
\begin{itemize}
\item Leveraging passivity-based analytical tools \cite{Fox2013Population-Game, 9029756} developed for population games, we analyze the impact of payoff estimation errors on the convergence of the agent decision-making process to its equilibrium state in task allocation games. In particular, we discuss how the influence of the estimation error can be mitigated by decreasing the rate of agents' strategy revision.
  
\item Based on the analysis, we propose the design of a stochastic alarm clock using a non-homogeneous Poisson process to guarantee the convergence. Through simulations with a large number of agents, we illustrate our analytical results and evaluate the effectiveness of the proposed clock design.

\end{itemize}

The paper is organized as follows: In Section~\ref{sec:problem_description}, we formally introduce task allocation games and an agent decision-making model, specifying when and how each agent revises its strategy selection and how it estimates the payoffs for available strategies. In Section~\ref{sec:analysis}, we present an analysis of how errors in payoff estimation affect convergence of the agent decision-making process in task allocation games, and propose a design for a non-homogeneous Poisson alarm clock that guarantees convergence. In Section~\ref{sec:simulations}, we present simulation results that illustrate both our analytical findings and the effectiveness of the proposed clock design.

\section{Preliminaries and Problem Description} \label{sec:problem_description}
In task allocation games \cite{9561809}, there is a finite number $n$ of tasks for a population of agents to carry out. Each agent can select one of available strategies to address the tasks. In this paper, for a concise presentation, we assume that the number of strategies and tasks is the same. We denote the amount of jobs to be completed associated with each task $i \in \{1, \cdots, n\}$ by a non-negative, time-dependent variable $q_i(t) \in [0, q_{\text{max}}]$. We refer to $q(t) = (q_1(t), \cdots, q_n(t))$ as the \textit{game state}. When $q_i(t)$ is below its maximum, $q_{\text{max}}$, the variable increases at a constant rate as more jobs are assigned to the agents and decreases as the agents complete the jobs. This decrease is based on the agents' strategy selections, represented by the \textit{population state} $x(t) = (x_1(t), \cdots, x_n(t))$, where each entry $x_i(t)$ denotes the fraction of the population selecting strategy~$i$. Let $\mathbb X$ be the space of all viable population states, defined as $\mathbb X = \left\{ x \in \mathbb R_+^n \,\big|\, \textstyle \sum_{i=1}^n x_i = 1 \right\}$, where $\mathbb R_+^n$ is the set of (element-wise) non-negative $n$-dimensional vectors.

\subsection{Task Allocation Game Model} We adopt the following dynamical system representation from \cite{9561809} to specify how $q_i(t)$ varies over time: If $q_i(t) < q_{\text{max}}$, then 
\begin{align} \label{eq:task_allocation_games}
  \dot q_i(t) =
    -\underbrace{\mathcal F_i (q_i(t), x_i(t))}_{\text{decrease rate}} + \underbrace{w_i}_{\text{increase rate}}, ~ q_i(0) \geq 0,
\end{align}
and $\dot q_i(t) = \min \{0, -\mathcal F_i (q_i(t), x_i(t)) + w_i \}$ if $q_i(t) = q_{\text{max}}$, where the latter condition ensures that $q_i(t)$ does not exceed its maximum $q_{\text{max}}$.
The function $\mathcal F_i: [0, q_{\text{max}}] \times \mathbb R_+ \to \mathbb R_+$
is non-negative\footnote{Note that even if the $i$-th portion $x_i(t)$ of the population state is within $[0, 1]$, we define the domain of $\mathcal F_i$ corresponding to $x_i(t)$ as the entire set $\mathbb R_+$ of non-negative real numbers.}, and $w_i$ is a positive constant. 
Also, to ensure that $q_i(t)$ is non-negative for all $t \geq 0$, the function $\mathcal F_i(q_i, x_i)$ is zero whenever $q_i = 0$.

According to \eqref{eq:task_allocation_games}, each agent can carry out one task at a time. The output of the game model is the payoff vector $p(t) = (p_1(t), \cdots, p_n(t))$, where we define each $p_i(t)$ as $p_i(t) = q_i(t)$ to associate $p_i(t)$ with the amount of remaining jobs in the $i$-th task. Consequently, under this definition of the payoff vector, agents can be incentivized to carry out the tasks with more remaining jobs.
The following are assumptions we impose on \eqref{eq:task_allocation_games}.
\begin{assumption} \label{assumption:function_F}
  The function $\mathcal F_i: [0, q_{\text{max}}] \times \mathbb R_+ \to \mathbb R_+$ is continuously differentiable and satisfies
  \begin{subequations}
    \begin{align}
      &\lim_{x_i \to \infty} \mathcal F_i(q_i, x_i) = \infty ~ \text{for any positive $q_i$} \label{eq:assumption_a_on_F} \\
      &\frac{\partial \mathcal F_i}{\partial x_i}(q_i, x_i) > 0 ~ \text{for any positive $q_i, x_i$} \label{eq:assumption_b_on_F} \\
      &\frac{\partial \mathcal F_i}{\partial q_i}(q_i, x_i) > 0 ~ \text{for any  positive $q_i, x_i$}. \label{eq:assumption_c_on_F}
    \end{align}
  \end{subequations}
  In other words, the strategies are designed such that \eqref{eq:assumption_a_on_F} and \eqref{eq:assumption_b_on_F} imply that the more agents there are taking on the same task, the faster it can be completed. Additionally, \eqref{eq:assumption_c_on_F} suggests that the larger the amount of remaining jobs, the easier it is for the agents to locate and coordinate to complete them.
\end{assumption}

Note that, as we have proved in Lemma~\ref{lemma:delta_antipassivity} in Appendix~\ref{sec:delta_antipassivity_tag}, under Assumption~\ref{assumption:function_F}, the game model \eqref{eq:task_allocation_games} is \textit{$\delta$-antipassive}, as defined in the appendix.

\begin{assumption} \label{assumption:unique_equilibrium}
  The game model \eqref{eq:task_allocation_games} has a unique equilibrium $(q^\ast, x^\ast)$ in $[0, q_{\text{max}})^n \times \mathbb X$ that satisfies $x_i^\ast (q_j^\ast - q_i^\ast) \leq 0, \, \forall i,j \in \{1, \cdots, n\}$.
\end{assumption}

As detailed in Section~\ref{sec:decision_making_model}, this implies the uniqueness of the equilibrium of \eqref{eq:task_allocation_games} under the agent decision-making model considered in this work.

\begin{example} \label{example:task_allocation_games}
  As demonstrated in \cite{9561809}, the game model \eqref{eq:task_allocation_games} can be applied to a multi-robot trash collection application. In this application, there are $n$ spatially separated patches, with the variable $q_i(t)$ representing the volume of trash in the $i$-th patch. The function $\mathcal F_i$ is defined as
  \begin{align} \label{eq:task_allocation_game_example}
    \mathcal F_i(q_i, x_i) = R_i \frac{e^{\alpha_i q_i} - 1}{e^{\alpha_i q_i} + 1} x_i^{\beta_i},
  \end{align}
  where $R_i$, $\alpha_i$, and $\beta_i$ are positive parameters.
\end{example}

\begin{remark}
  For brevity, we consider the scenario where the number of tasks and strategies is the same, and the payoff vector is defined as $p(t) = q(t)$. Also, as specified in \eqref{eq:task_allocation_games}, every agent can undertake only one task at a time. However, this scenario can be extended to more general cases where the set of available strategies exceeds the number of tasks, and some strategies enable an agent to carry out multiple tasks simultaneously. In such scenarios, the payoff vector needs to be redefined as a non-trivial function of $q(t)$, e.g., $p(t) = G q(t)$. Primary investigations of such extensions have been conducted in \cite{10383344}, and we consider adopting the analysis from this reference as a future research direction.
\end{remark}

\subsection{Learning Rule, Stochastic Alarm Clock, and Evolutionary Dynamics Model} \label{sec:decision_making_model}
Two central components of the agent decision-making model are the \textit{learning rule} (also referred to as the \textit{revision protocol}) and the \textit{stochastic alarm clock}. The learning rule describes how each agent changes its strategy selection when given an opportunity and is typically defined as a function
\begin{align} \label{eq:revision_protocol}
  \rho_{ji}(p, x) = \mathbb P \left( \text{agent switching strategy from $j$ to $i$} \right).
\end{align}
We consider the class of learning rules that depend only on $p$, i.e., $\rho_{ji}(p, x) = \rho_{ji}(p)$, and are Lipschitz continuous. That is, there exists a positive constant $c$ for which the following inequality holds:
\begin{align} \label{eq:lipschitz_continuity_revision_protocol}
  \left| \rho_{ji}(p) - \rho_{ji} (\bar p) \right| \leq c \left \| p - \bar p \right \|_2, ~ \forall p, \bar p \in \mathbb R^n.
\end{align}
Below is an example of an existing model that belongs to this class.
\begin{example} \label{example:learning_rules}
  Suppose $\rho_{ji}$ is the Smith learning rule, i.e., $\rho_{ji}(p) = \varrho [p_i - p_j]_+ = \varrho \max (0, p_i - p_j)$ for $i \neq j$, originally investigated in transportation research \cite{10.2307/25768135}. Given that the range of each $p_i$ is bounded, we select a constant $\varrho$ to ensure that $\sum_{i=1}^n \varrho [p_i - p_j]_+ \leq 1$ holds for all $j$ in $\{1, \cdots, n\}$.
  We can derive the following inequality and verify the Lipschitz continuity of the Smith learning rule:
  \begin{align}
    \left | \rho_{ji}(p) - \rho_{ji}(\bar p) \right | \leq \sqrt{2} \varrho \left \| p - \bar p \right \|_2.
  \end{align}
\end{example}

While the learning rule defines how each agent revises its strategy, the stochastic alarm clock determines when the agent can make strategy revision using the learning rule. Typically, a homogeneous Poisson process $N(t)$ is utilized to define the clock \cite[Chapter~10]{Sandholm2010-SANPGA-2}. Specifically, at each ring of the clock, defined as any time $t$ satisfying $N(t) - N(t-\delta) \geq 1, ~ \forall \delta>0$, the agent retains the opportunity to revise its strategy. As the Poisson processes assigned to the agents are independent, if they can assess the payoff vector $p$, the agents' strategy revision can be conducted in a decentralized manner.

Suppose these Poisson processes are identically distributed, and let $\lambda$ define the rate of the processes. Consider the following ordinary differential equation:
\begin{align} \label{eq:edm}
  \dot x_i(t) \!=\! \textstyle \lambda \!\sum_{j=1}^n \!x_j(t) \rho_{ji}(p(t)) \!-\! \lambda x_i(t) \!\sum_{j=1}^n \!\rho_{ij}(p(t)).
\end{align}
We refer to \eqref{eq:edm} as the \textit{evolutionary dynamics model (EDM)}. 
It has been well-documented in \cite[Chapters~5 and 10]{Sandholm2010-SANPGA-2} that when the Poisson processes are independent and identically distributed and there is a sufficiently large number of agents, the solution of \eqref{eq:edm} serves as a good predictor of the population state with arbitrarily high accuracy. Throughout this work, we assume that \eqref{eq:edm} is \textit{$\delta$-passive}, as defined in Appendix~\ref{sec:delta_passivity_edm}. We make the following assumptions regarding \eqref{eq:edm}, which is widely known as \textit{Nash stationarity} \cite{9029756}.
\begin{assumption} \label{assumption:nash_stationarity}
  With $p=q$, the following two conditions are equivalent.
  \begin{enumerate}
  \item $\sum_{j=1}^n x_j \rho_{ji}(p) \!-\! x_i \sum_{j=1}^n \rho_{ij}(p) \!=\! 0, \forall i \in \{1, \cdots, n\}$ \vspace{.1cm}
  \item $x_i (q_j - q_i) \leq 0, \, \forall i,j \in \{1, \cdots, n\}$
  \end{enumerate}
  \vspace{.1cm}
\end{assumption}
Note that when $\rho_{ji}$ is defined as the Smith learning rule, \eqref{eq:edm} satisfies Assumption~\ref{assumption:nash_stationarity}. Furthermore, in conjunction with Assumption~\ref{assumption:unique_equilibrium}, Assumption~\ref{assumption:nash_stationarity} implies that the feedback interconnection of \eqref{eq:task_allocation_games} and \eqref{eq:edm}, has a unique equilibrium.

In addition, according to Assumptions~\ref{assumption:unique_equilibrium} and \ref{assumption:nash_stationarity}, the equilibrium state $(q^\ast, x^\ast)$ satisfies $\mathcal F_i(q_i^\ast, x_i^\ast) = w_i, ~\forall i \in \{1, \cdots, n\}$ and $q_i^\ast = q_j^\ast,~\forall i,j \in \{1, \cdots, n\}$. Consequently, we can infer that $(q^\ast, x^\ast)$ is the optimal state at which the infinity norm $\|q(t)\|_{\infty}$ of the game state is minimized over the set $\mathbb O$ of stationary points of \eqref{eq:task_allocation_games}, defined by $\mathbb O = \{(q,x) \in [0, q_{\text{max}}]^n \times \mathbb X \,|\, \mathcal F_i (q_i, x_i) = w_i, ~ \forall i \in \{1, \cdots, n\}\}$. In other words, it holds that $\|q^\ast\|_{\infty} = \min_{(q,x) \in \mathbb O} \|q \|_{\infty}$.

\subsection{Payoff Vector Estimation}
In our problem formulation, instead of directly assessing $q(t)$, each agent~$k$ observes a function $y^{(k)}(t) = h^{(k)}(q(t))$ of the game state $q(t)$ and communicates with its neighbors to estimate $q(t)$. For instance, some agents may observe the full state $q(t)$, others might only take measurements of $q_i(t)$ associated with their strategy selection $i$, or yet others may not be able to collect any measurements of $q(t)$ at all. Since not every agent in the population can directly observe the full state $q(t)$, they adopt an estimation rule to infer the full state using their own observations $y^{(k)}(\tau), \tau \in [0,t]$ and information from their neighbors.

Motivated by the large literature on distributed state estimation \cite{REGO201936},
we consider that each agent~$k$ shares its own estimate $\hat q^{(k)}(t)$ of $q(t)$ whenever it can communicate with its neighbors. Let $\mathbb N_k (t)$ denote the set of neighbors of agent~$k$ at time~$t$. Given its observation $y^{(k)}(t)$ of $q(t)$ and estimates $\{ \hat q^{(l)}(t^-) \}_{l \in \mathbb N_k(t)}$ from its neighbors\footnote{We use the notation $\hat q^{(l)}(t^-)$ to denote the estimate of agent~$l$ specifically before the agent updates its estimate at time~$t$.}, the agent updates its estimate $\hat q^{(k)}(t)$ from which the estimate $\hat p^{(k)}(t)$ of the payoff vector $p(t)$ can be derived as $\hat p^{(k)}(t) = \hat q^{(k)} (t)$. We represent the estimation rule as
\begin{align} \label{eq:state_estimation_rule}
  \hat q^{(k)}(t) = g \Big( \big \{ \hat q^{(l)}(t^-) \big \}_{l \in \mathbb N_k(t)}, y^{(k)}(t) \Big).
\end{align}
We provide the following example to illustrate this.

\begin{example} \label{eq:payoff_estimation_example}
  Given $N$ agents in the population, suppose that only $N_{\text{leader}} (< N)$ leader agents can observe the game state $q(t)$, while the others cannot. Assuming that the agents are not aware of the game model \eqref{eq:task_allocation_games},
  the estimation rule \eqref{eq:state_estimation_rule} can be implemented as
\begin{align} \label{eq:consensus_estimation_rule}
  \hat q^{(k)} (t) \!=\!
  \begin{cases}
    q(t)\! & \text{if $k$ is a leader} \\
    \frac{1}{|\mathbb N_k(t)|} \sum_{l \in \mathbb N_k(t)} \hat q^{(l)}(t^-)\! & \text{otherwise}.
    \end{cases}
\end{align}
In other words, agent~$k$ sets its estimate $\hat q^{(k)}(t)$ to $q(t)$ if it is a leader and directly measures $q(t)$. Otherwise, the agent updates $\hat q^{(k)}(t)$ to the average of its neighbors' estimates. 
\end{example}

The main results of this paper are applicable to any estimation rule $g$, provided it satisfies the following two assumptions:
1) Given that the game state $q(t)$ is bounded, the estimation error $\hat q^{(k)}(t) - q(t)$ also remains bounded for all $t \geq 0$, and 2) as the variation in $q(t)$ diminishes, i.e., $\| \dot q(t) \|_2 \to 0$ as $t \to \infty$, all agents in the population can asymptotically recover the full state of $q(t)$. We formally state these assumptions as follows:
\begin{assumption} \label{assumption:estimation_convergence}
  The estimation rule \eqref{eq:state_estimation_rule} satisfies the following conditions for every agent $k$:
  \begin{subequations}
    \begin{align}
      &\sup_{t \geq 0} \epsilon^{(k)} (t) < B_{\epsilon} \\
      &\lim_{t \to \infty} \left\| \dot q(t) \right\|_2 = 0 \implies \lim_{t \to \infty} \epsilon^{(k)} (t) = 0,
    \end{align}
  \end{subequations}
  where $ \epsilon^{(k)} (t) = \|\hat q^{(k)}(t) - q(t)\|_2$ and $B_{\epsilon}$ is a fixed positive constant.
\end{assumption}

Note that when the underlying communication graph is fixed and strongly connected, \eqref{eq:consensus_estimation_rule} satisfies Assumption~\ref{assumption:estimation_convergence}.


\section{Convergence Analysis and Revision Rate Update} \label{sec:analysis}
We discuss how the estimation error $\epsilon^{(k)}(t)$ influences convergence of the population state and the game state. In a population of $N$ agents, suppose each agent~$k$ adopts the learning rule $\rho_{ji}(\hat p^{(k)}(t))$ using its own payoff vector estimate $\hat p^{(k)}(t)$ for strategy revision. Let $\mathbb M_j^N (t)$ denote the set of agents adopting strategy~$j$ at time~$t$. According to the definitions of the learning rule and the Poisson alarm clock, when the clock of an agent rings, the probability of that agent currently choosing strategy~$j$ and switching to strategy~$i$ can be specified as 
\begin{align} \label{eq:average_strategy_revision_probability_finite}
  x_j^N(t) \sum_{k \in \mathbb M_j^N(t)} \frac{\rho_{ji}(\hat p^{(k)} (t))}{|\mathbb M_j^N(t)|},
\end{align}
where $x_j^N(t)$ is the fraction of the $N$-agent population adopting strategy~$j$ at time~$t$, $\hat p^{(k)}(t) = \hat q^{(k)}(t)$ is agent~$k$'s estimate of the payoff vector $p(t)$, and $|\mathbb M_j^N(t)|$ denotes the cardinality of the set $\mathbb M_j^N(t)$.
Note that if $\mathbb M_j^N(t)$ is empty, implying $x_j^N(t) = 0$, the expression in  \eqref{eq:average_strategy_revision_probability_finite} is considered zero.

When the size of the population becomes arbitrarily large, that is, as $N$ tends to infinity, we can derive
\begin{multline} \label{eq:average_strategy_revision_probability}
  \lim_{N \to \infty} x_j^N(t) \!\sum_{k \in \mathbb M_j^N(t)} \!\frac{\rho_{ji}(\hat p^{(k)} (t))}{|\mathbb M_j^N(t)|} \\ = x_j(t) \lim_{N \to \infty} \sum_{k \in \mathbb M_j^N(t)} \frac{\rho_{ji}(\hat p^{(k)} (t))}{|\mathbb M_j^N(t)|},
\end{multline}
where $x_j(t)$ represents the fraction of agents in the infinite population adopting strategy~$j$. The limit $\lim_{N \to \infty} \sum_{k \in \mathbb M_j^N(t)} \frac{\rho_{ji}(\hat p^{(k)} (t))}{|\mathbb M_j^N(t)|}$ in the right-hand side of \eqref{eq:average_strategy_revision_probability} represents the average strategy revision probability of all $j$-strategists. Assuming this limit exists and based on the definition of $\rho_{ji}$ in \eqref{eq:revision_protocol}, we can validate that
\begin{enumerate}
\item $0 \leq \lim_{N \to \infty} \sum_{k \in \mathbb M_j^N(t)} \frac{\rho_{ji}(\hat p^{(k)} (t))}{|\mathbb M_j^N(t)|} \leq 1$, and

\item $\sum_{i=1}^n \lim_{N \to \infty} \sum_{k \in \mathbb M_j^N(t)} \frac{\rho_{ji}(\hat p^{(k)} (t))}{|\mathbb M_j^N(t)|} = 1$.
\end{enumerate}

Consequently, analogous to the derivation of \eqref{eq:edm}, when agents revise their strategies based on estimates of the payoff vector and using Poisson alarm clocks with a rate of $\lambda$, the following EDM can be derived:
\begin{align} \label{eq:edm_with_error_term}
  \dot x_i(t)
  &= \lambda \sum_{j=1}^n x_j(t) \lim_{N \to \infty} \sum_{k \in \mathbb M_j^N(t)} \frac{\rho_{ji}(\hat p^{(k)} (t))}{|\mathbb M_j^N(t)|} \nonumber \\
  & \quad - \lambda x_i(t) \sum_{j=1}^n \lim_{N \to \infty} \sum_{k \in \mathbb M_i^N(t)} \frac{\rho_{ij}(\hat p^{(k)} (t))}{|\mathbb M_i^N(t)|} \nonumber \\
  &=\lambda \!\sum_{j=1}^n x_j(t) \rho_{ji}(p(t)) - \lambda x_i(t) \!\sum_{j=1}^n \rho_{ij}(p(t)) + \lambda \xi_i(t),
\end{align}
where $\xi_i(t)$ is defined as
\begin{align*}
  \xi_i(t)
  &= \sum_{j=1}^n x_j(t) \left( \lim_{N \to \infty} \!\sum_{k \in \mathbb M_j^N(t)} \!\frac{\rho_{ji}(\hat p^{(k)} (t))}{|\mathbb M_j^N(t)|} - \rho_{ji}(p(t)) \right) \nonumber \\
  & \quad \!-\! x_i(t) \!\sum_{j=1}^n \!\left(\! \lim_{N \to \infty} \!\sum_{k \in \mathbb M_i^N(t)} \!\frac{\rho_{ij}(\hat p^{(k)} (t))}{|\mathbb M_i^N(t)|} \!-\! \rho_{ij}(p(t)) \!\right).
\end{align*}
According to its definition, $\xi_i(t)$ can be interpreted as the difference between the probability of agents' strategy revision based on the payoff vector $p(t)$ and the probability based on the estimates $\hat p^{(k)}(t)$.

By the Lipschitz continuity of $\rho_{ji}$, we compute the upper bound of $\xi_i(t)$ as
\begin{align} \label{eq:bound_on_edm_mismatch_term}
  | \xi_i(t) |
  &\leq c \sum_{j=1}^n x_j(t) \lim_{N \to \infty} \sum_{k \in \mathbb M_j^N(t)} \frac{\epsilon^{(k)} (t)}{|\mathbb M_j^N(t)|}.
\end{align}
Recall that $\epsilon^{(k)} (t)$ is defined as $\epsilon^{(k)} (t) = \|\hat q^{(k)} (t) - q(t)\|_2$. According to  Assumption~\ref{assumption:estimation_convergence}, since $\epsilon^{(k)} (t)$ is bounded, so does $\xi_i(t)$. Also, we can infer that $\xi_i(t)$ vanishes as does the estimation error $\epsilon^{(k)} (t)$ at every agent~$k$.

\subsection{Convergence to Equilibrium State}
\begin{figure}
  \center
  \includegraphics[trim={0.0in 0.1in 0.0in 0.0in}, clip ,width=3.0in]{./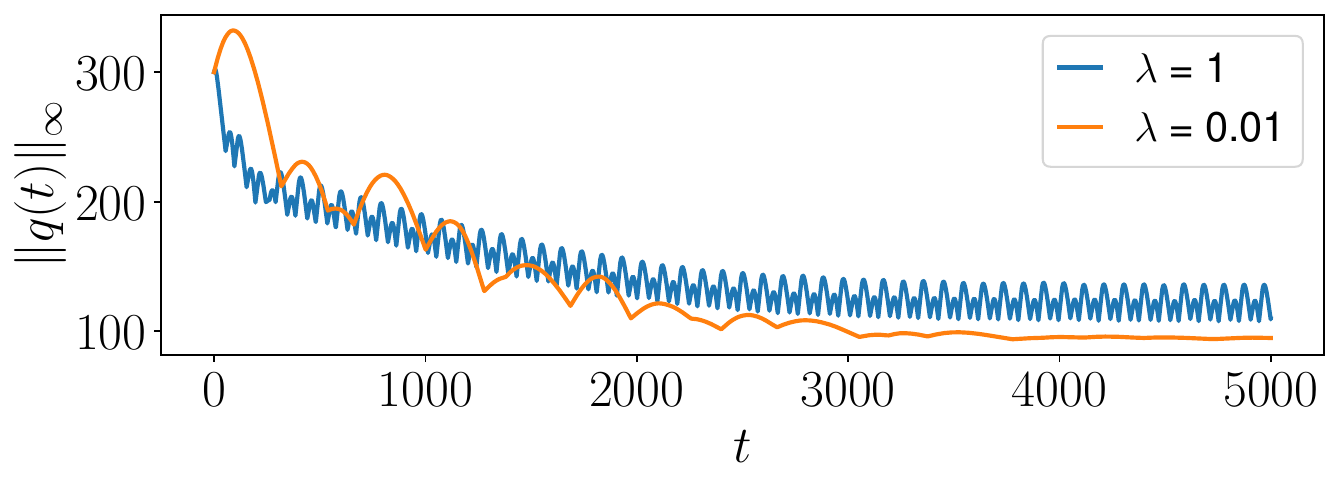}
  \caption{Trajectories of the game state derived by the Smith learning rule in the task allocation game described in Example~\ref{example:task_allocation_games}, where the agents are estimating the payoff vector using \eqref{eq:consensus_estimation_rule}.}
  \label{fig:effect_of_estimation_error}
  \vspace{-1.5em}
\end{figure}
To visualize the effect of the estimation error on the convergence, we conduct simulations using the scenario explained in Example~\ref{example:task_allocation_games}, the Smith learning rule, explained in Example~\ref{example:learning_rules}, and the estimation rule \eqref{eq:consensus_estimation_rule}. As we use the same simulation setup as described in Section~\ref{sec:simulations}, except $\lambda$ is fixed here, we refer to that section for details of the simulation setup.

The simulations are executed with two different revision rates, $\lambda = 0.01$ and $1.0$, and the results are depicted in Fig.~\ref{fig:effect_of_estimation_error}.
When the revision rate is high, $\| q(t) \|_\infty$ experiences a steep decrease early on; however, in the long run, its value tends to oscillate and be higher compared to the scenario with a lower revision rate. Conversely, when the revision rate is low, the agents have fewer opportunities to revise their strategies, resulting in a larger overshoot in the early stages.
Based on these observations, to ensure rapid convergence in the initial stages, the rate should be set sufficiently high; however, to achieve a lower value of $\| q(t) \|_\infty$ in the long term, the rate should be reduced.

Therefore, the optimal design of the strategy revision rate would necessitate a decrease over time.
To provide a rigorous justification for our observation, we present the following theorem. The proof of the theorem is provided in Appendix~\ref{sec:theorem_proof}.
\begin{theorem} \label{thm:convergence}
  For a given revision rate $\lambda$, let $q_{\lambda}(t)$ and $x_{\lambda}(t)$ be the game and population states, respectively, determined by the feedback interconnection of the game model \eqref{eq:task_allocation_games} and EDM \eqref{eq:edm_with_error_term}. Under Assumptions~\ref{assumption:function_F}-\ref{assumption:estimation_convergence}, it holds that
  \begin{align*}
    \limsup_{t \to \infty} \left( \left\| q_{\lambda}(t) - q^\ast \right\|_2 + \left\| x_\lambda(t) - x^\ast \right\|_2 \right) \to 0 \text{ as } \lambda \to 0,
  \end{align*}
  where $(q^\ast, x^\ast)$ is the unique equilibrium state of the closed loop system defined by \eqref{eq:task_allocation_games} and \eqref{eq:edm} with $p(t) = q(t)$.
\end{theorem}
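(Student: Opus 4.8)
The plan is to combine the passivity-based Lyapunov analysis of the feedback interconnection of the game model \eqref{eq:task_allocation_games} and the EDM with a singular-perturbation reading of the revision rate $\lambda$, treating the mismatch term $\lambda\xi(t)$ in \eqref{eq:edm_with_error_term} as a vanishing disturbance. First I would use $\delta$-antipassivity of \eqref{eq:task_allocation_games} (Lemma~\ref{lemma:delta_antipassivity}) and $\delta$-passivity of the EDM to build a Lyapunov function $V(q,x)=\mathcal S + \mathcal L$ for the nominal closed loop of \eqref{eq:task_allocation_games} and \eqref{eq:edm}, the sum of the EDM storage $\mathcal S$ and the game storage $\mathcal L$ evaluated at $p=q$. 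Along the nominal trajectories one gets $\dot V\le -\lambda\,\mathcal W_{\mathrm{E}}(q,x)-\mathcal W_{\mathrm{G}}(q,x)\le 0$, where the EDM dissipation $\mathcal W_{\mathrm E}$ carries the factor $\lambda$ because the revision rate multiplies the microdynamics, and Assumptions~\ref{assumption:unique_equilibrium}--\ref{assumption:nash_stationarity} guarantee that the right-hand side vanishes only at the unique equilibrium $(q^\ast,x^\ast)$, which by those assumptions satisfies $\mathcal F_i(q_i^\ast,x_i^\ast)=w_i$, $q_i^\ast=q_j^\ast$ and lies in the interior $[0,q_{\max})^n\times\mathbb X$; a LaSalle argument then gives global convergence in the error-free case.

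Next, for a fixed $\lambda$, I would evaluate $\dot V$ along \eqref{eq:edm_with_error_term}: the extra term is $\lambda\langle\partial_x\mathcal S,\xi\rangle$, which by \eqref{eq:bound_on_edm_mismatch_term}, Assumption~\ref{assumption:estimation_convergence}, and compactness of $[0,q_{\max}]^n\times\mathbb X$ is bounded by $\lambda c\kappa\bar\epsilon(t)$, with $\bar\epsilon(t)$ the population-averaged estimation error and $\kappa$ a bound on $\|\partial_x\mathcal S\|$. Hence $\dot V<0$ outside a sublevel set $\Omega_\lambda$ on which $\lambda\mathcal W_{\mathrm E}+\mathcal W_{\mathrm G}\lesssim\lambda c\kappa\bar\epsilon$, so every trajectory enters and remains in $\Omega_\lambda$, giving a residual bound $\limsup_{t\to\infty}\big(\|q_\lambda(t)-q^\ast\|_2+\|x_\lambda(t)-x^\ast\|_2\big)\le r(\lambda)$. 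It remains to show $r(\lambda)\to 0$ as $\lambda\to 0$. The mechanism is the time-scale separation: since $\dot x$ is $O(\lambda)$, the game state is forced onto an $O(\lambda)$ neighborhood of the quasi-steady-state manifold $\{q=\pi(x)\}$ defined by $\mathcal F_i(\pi_i(x),x_i)=w_i$, which is well defined and $C^1$ near $(q^\ast,x^\ast)$ by the strict monotonicity \eqref{eq:assumption_c_on_F}; on this manifold $\dot q=\pi'(x)\dot x=O(\lambda)$, so $\|\dot q_\lambda(t)\|_2\to 0$ in the joint limit, and the implication in Assumption~\ref{assumption:estimation_convergence} then forces $\bar\epsilon(t)$, hence $\xi$, to vanish, collapsing $\Omega_\lambda$ onto $\{(q^\ast,x^\ast)\}$.

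The main obstacle is making this last step rigorous, because it closes a loop between the two time scales and the estimation error ($\xi$ small requires $\dot q$ small, which requires $x$ near the reduced equilibrium, which requires $\xi$ small), and because Assumption~\ref{assumption:estimation_convergence} only supplies an asymptotic rather than quantitative implication, while for fixed $\lambda$ the perturbed trajectory is only known to hover within $\Omega_\lambda$. I would break the loop either by a bootstrap over a nested family of sublevel sets of $V$ --- using the residual at one level to shrink $\mathcal W_{\mathrm G}$ and hence $\|\dot q\|$, invoking Assumption~\ref{assumption:estimation_convergence} to shrink $\bar\epsilon$, and re-running the estimate on a smaller level set --- or by a Tikhonov-type reduction: pass to the slow time $\tau=\lambda t$, show the trajectories converge (uniformly on compacta) to solutions of the system obtained from the nominal EDM \eqref{eq:edm} by setting $p=\pi(x)$, on which the estimation error is asymptotically zero, and apply the nominal-case Lyapunov argument there. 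Some care is also needed with the saturation $\dot q_i=\min\{0,-\mathcal F_i+w_i\}$ at $q_i=q_{\max}$: since $(q^\ast,x^\ast)$ is interior and $V$ is nonincreasing, one argues the relevant trajectories eventually stay away from the constraint boundary, so the smooth dynamics \eqref{eq:task_allocation_games} and the $\delta$-antipassivity estimate apply.
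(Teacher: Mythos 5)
Your overall route is the one the paper takes -- a passivity-pairing Lyapunov argument giving an $O(\lambda)$-residual set, followed by a quasi-steady-state-manifold argument (your $\pi$ is exactly the paper's $\mathcal G$ built from $\mathcal F_i(g_i(x_i),x_i)=w_i$) to collapse that set -- but two steps as written do not go through. First, the unweighted sum $V=\mathcal S+\mathcal L$ does not satisfy your claimed dissipation inequality. Along the closed loop, $\dot{\mathcal S}=\nabla_p^T\mathcal S\,\dot p+\nabla_x^T\mathcal S\,\dot x=\tfrac{1}{\lambda}\dot x^T\dot q+\lambda\nabla_x^T\mathcal S\,\mathcal V$ (error-free case), while $\dot{\mathcal L}=\nabla_q^T\mathcal L\,\dot q-\dot q^T\dot x$, so the cross terms cancel only if you weight the EDM storage by $\lambda$: the correct candidate is $\lambda\mathcal S+\mathcal L$, which yields $\lambda^2\nabla_x^T\mathcal S\,\mathcal V+\nabla_q^T\mathcal L\,\dot q\le 0$ plus the perturbation $\lambda\xi^T(\lambda\nabla_x\mathcal S-\dot p)$. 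With $\mathcal S+\mathcal L$ you are left with the sign-indefinite term $(\tfrac{1}{\lambda}-1)\dot x^T\dot q$, which in fact blows up as $\lambda\to 0$.

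Second, the "loop" you identify ($\xi$ small requires $\dot q$ small requires $\xi$ small) does not actually need a bootstrap or a full Tikhonov reduction: it is broken by ordering the two storage functions rather than treating them symmetrically. The first Lyapunov step only requires \emph{boundedness} of $\xi$ (the first condition of Assumption~\ref{assumption:estimation_convergence}), because the perturbation enters $\dot V$ with a prefactor $\lambda$; this already forces the trajectory into a set where $\mathcal L\le\kappa_1(\lambda)$, hence $\|\dot q\|_2$ is small, hence by the second condition of Assumption~\ref{assumption:estimation_convergence} the errors $\epsilon^{(k)}$ and so $\xi$ are small. Only \emph{then} does one run a second, separate Lyapunov argument on $\mathcal S(\mathcal G(x),x)$ along the actual trajectory (using Lipschitz continuity of $\rho_{ji}$ to absorb $q-\mathcal G(x)$ into a vanishing disturbance $\bar\xi$, and negative semi-definiteness of $\nabla\mathcal G$ to keep the sign), giving $\mathcal S(\mathcal G(x),x)\le\kappa_2(\lambda)$. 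So the two sub-arguments you list as alternative ways to close the gap are, in effect, merged into one sequential argument in the paper; your proposal is on the right track but leaves precisely this closing step, which is the substantive content of the proof, unexecuted.
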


\subsection{Revision Rate Update} \label{sec:methods}
As we stated in Theorem~\ref{thm:convergence} and observed from the simulation results depicted in Fig.~\ref{fig:effect_of_estimation_error}, starting with a high initial revision rate $\lambda$, followed by its proper regulation, ensures the convergence of $(q(t), x(t))$ to the equilibrium state, while particularly achieving steep convergence at the early stages of the game.
The proof of the theorem utilizes the so-called \textit{storage functions} $\mathcal L(q(t), x(t))$ and $\mathcal S(p(t), x(t))$ of the game model \eqref{eq:task_allocation_games} and EDM \eqref{eq:edm}, respectively, which are defined in Appendix~\ref{sec:passivity}, to construct a Lyapunov candidate function. Notably, by decreasing $\lambda$, we establish that both the storage functions diminish over time, thereby ensuring the convergence to the equilibrium state. This process requires knowledge of the game model~\eqref{eq:task_allocation_games} and its associated storage function $\mathcal L(q(t), x(t))$.
However, if the game model \eqref{eq:task_allocation_games} is unknown to the agents, this method of updating $\lambda$ using the game model becomes infeasible.

Instead, we propose updating $\lambda$ when the frequency of the agents revising to other strategies decreases, despite receiving revision opportunities. To implement this, we consider a time-varying revision rate $\lambda(t)$. Let $\{t_m\}_{m=1}^\infty$ and $\{\lambda_m\}_{m=1}^\infty$ be sequences of time instants and rates, respectively, where $\lambda(t)$ of each agent's Poisson alarm clock is defined as $\lambda(t) = \lambda_m$ for $t \in [t_m, t_{m+1})$.
An agent with the most accurate estimates of $(q(t), x(t))$, such as the leader agents in Example~\ref{eq:payoff_estimation_example}, 
updates the revision rate to $\lambda_{m+1} = \gamma \lambda_m$ with $\gamma \in (0, 1)$ at time $t_{m+1}$ if the following condition is met:
\begin{align} \label{eq:revision_rate_update_condition_03}
  \nabla_x^T \mathcal S(p(t_{m+1}), x(t_{m+1})) \mathcal V(p(t_{m+1}), x(t_{m+1})) \geq - \epsilon,
\end{align}
where $\mathcal V = (\mathcal V_1, \cdots, \mathcal V_n)$ with $\mathcal V_i(p,x) = \sum_{j=1}^n x_j \rho_{ji}(p) - x_i \sum_{j=1}^n \rho_{ij}(p)$ and $\epsilon$ is a small positive constant.
It then broadcasts the updated rate $\lambda_{m+1}$ to other agents using the same communication graph employed for payoff vector estimation. According to \eqref{eq:delta_passivity_requirement_02}, with small $\epsilon$, \eqref{eq:revision_rate_update_condition_03} implies that $\mathcal V(p(t_{m+1}), x(t_{m+1}))$ becomes small. Consequently, the agents change their strategies less frequently.

Additionally, we require $t_{m+1} - t_m \geq \frac{\tau}{\lambda_m}$ to ensure that every agent receives a strategy revision opportunity with a certain probability dependent on a constant $\tau$ before the revision rate is updated. To understand this, by definition, $1-e^{-\tau}$ represents the probability of a revision opportunity occurring within the time interval $[t_m, t_m + \frac{\tau}{\lambda_m} )$. Therefore, the inequality requirement suggests that with a larger $\tau$, it becomes increasingly likely that every agent will receive a revision opportunity before the revision rate updates at $t_{m+1}$.

\section{Simulations} \label{sec:simulations}

\begin{figure*}
  \center
  \subfigure[]{
    \includegraphics[trim={0.0in 0.0in 0.0in 0.0in}, clip ,width=1.9in]{./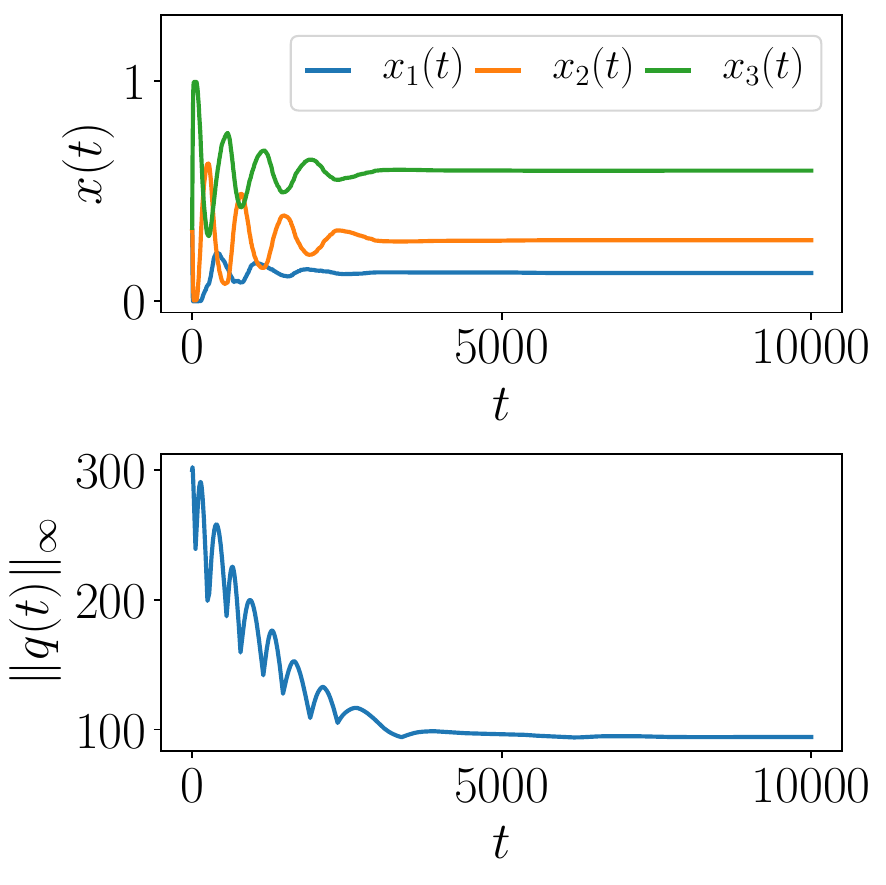}
    \label{fig:simulation_results_a}
  }
  \subfigure[]{
    \includegraphics[trim={0.0in 0.0in 0.0in 0.0in}, clip ,width=1.9in]{./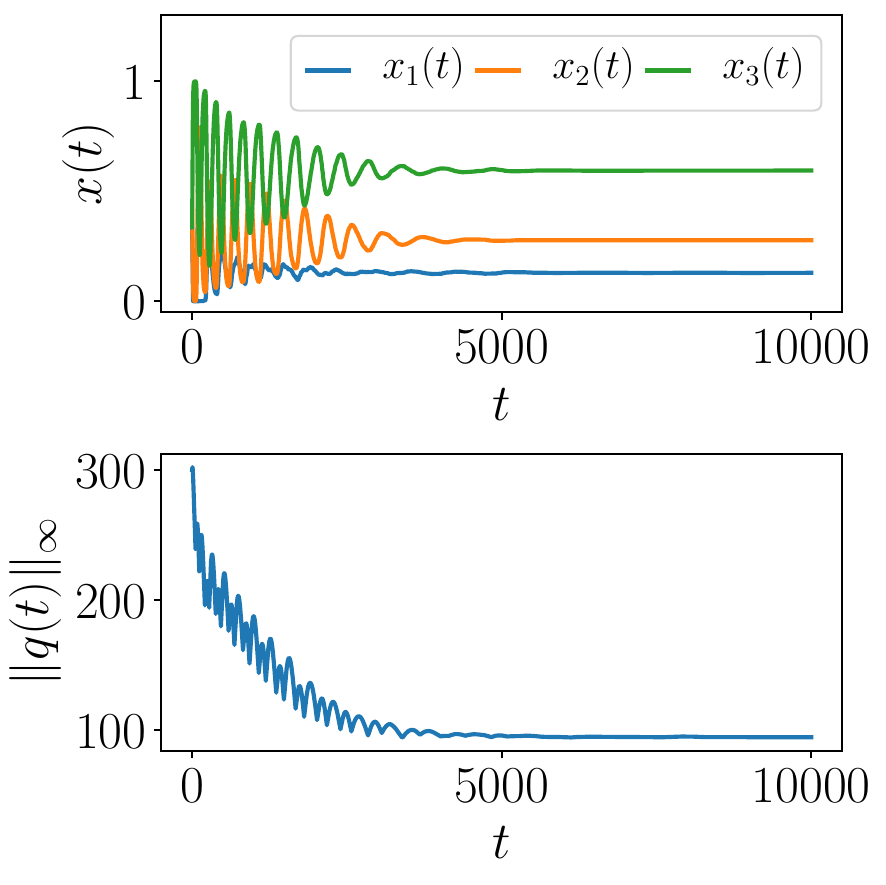}
    \label{fig:simulation_results_b}
  }
  \subfigure[]{
    \includegraphics[trim={0.0in 0.0in 0.0in 0.0in}, clip ,width=1.9in]{./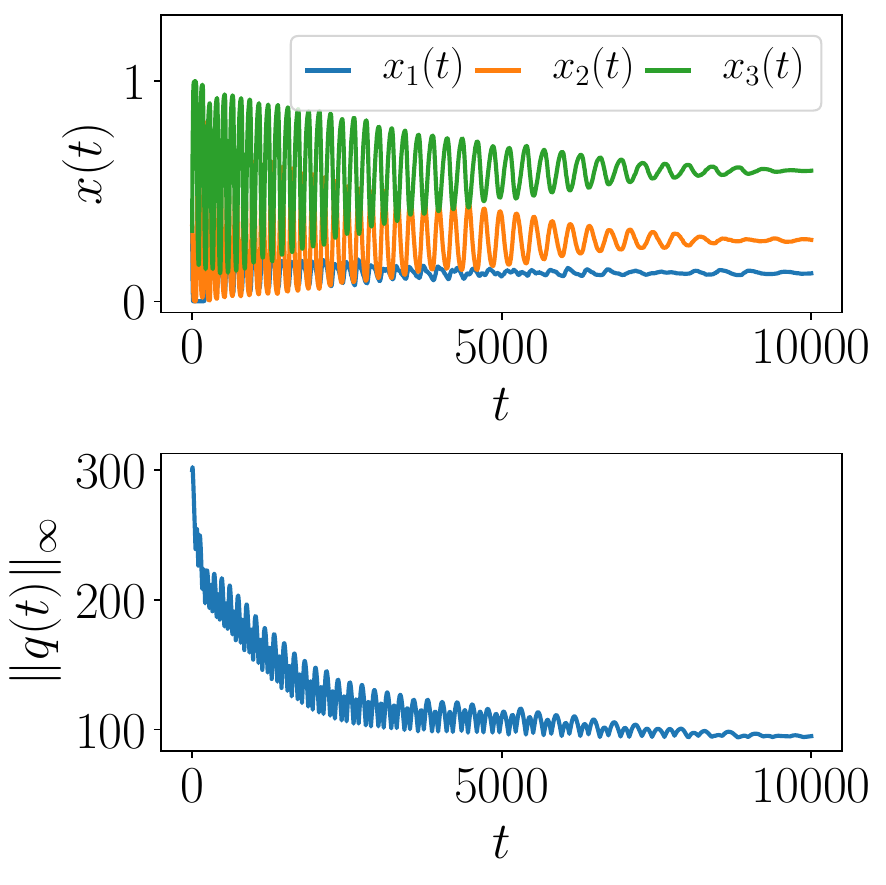}
    \label{fig:simulation_results_c}
  }
  \caption{Population state and game state trajectories when the rate of the Poisson alarm clock is updated according to the method described in Section~\ref{sec:methods}. The trajectories are examined using three different parameter choices: (a) $\gamma=0.8, \tau=0.2$, (b) $\gamma=0.95, \tau=1.0$, and (c) $\gamma=0.99, \tau=1.4$.}
  \label{fig:simulation_results}
  \vspace{-1em}
\end{figure*}

To illustrate our analysis and also to validate the strategy revision rate update method, we design and carry out simulations based on Example~\ref{example:task_allocation_games} with $n=3$.
The parameters of \eqref{eq:task_allocation_game_example} are set as $R_1=R_2=R_3 = 3.44$, $\alpha_1 = \alpha_2 = \alpha_3 = 0.036$, and $\beta_1 = \beta_2 = \beta_3 = 0.91$. The increase rate $w = (w_1, w_2, w_3)$ is specified as $w = (0.5, 1, 2)$. It can be verified that $\mathcal F_i$ satisfies Assumption~\ref{assumption:function_F}. Additionally, by setting a sufficiently large value for $q_{\text{max}}$, we can validate that Assumption~\ref{assumption:unique_equilibrium} holds.

In simulation, there are $3000$ agents in the population and they can share their estimates of the payoff vector via a strongly connected graph, generated by the Erd{\H o}s-R{\' e}nyi model with the edge formation probability $p=0.1$. Among the entire population, we uniformly randomly select $10\%$ of the total population as the leaders who can directly observe the game state $q(t)$, and the rest of the population cannot observe the payoff vector at all, as in the scenario explained in Example~\ref{eq:payoff_estimation_example}. Each agent $k$ computes its own estimate $q^{(k)}(t)$ according to the estimation rule described in \eqref{eq:consensus_estimation_rule}, and revises its strategy using the Smith learning rule $\rho_{ji}(p^{(k)}(t)) = \varrho [p_i^{(k)}(t) - p_j^{(k)}(t)]_+$ with $\varrho = 1/400$.\footnote{The choice $\varrho=1/400$ is made to ensure that $\varrho \sum_{i=1}^n [p_i^{(k)}(t) - p_j^{(k)}(t)]_+ \leq 1$ throughout the simulations.}

The communication and observation of the game state by the agents occur at discrete time instants, specifically at $t = 1, 2, 3, \cdots$. For its strategy revision, each agent takes samples of time instants based on its Poisson alarm clock, and at each sampled time, it revises its strategy using the Smith learning rule and estimated payoff vector.

We iterate the simulation over a range of parameters for the strategy revision update method, fixing the initial conditions to $x(0) = (1/3, 1/3, 1/3)$ and $q(0) = (100, 200, 300)$, and setting the initial estimates of $q(0)$ by all agents to $\hat q^{(k)}(0) = (0, 0, 0)$. Fig.~\ref{fig:simulation_results} illustrates the simulation results for three cases: (a) $\gamma = 0.8, \tau = 0.2$, (b) $\gamma = 0.95, \tau = 1.0$, and (c) $\gamma = 0.99, \tau = 1.4$, with the initial revision rate $\lambda(t_0)$ set to $1$ in all three cases. We fix $\epsilon = 0.01$ as the left-hand side term in \eqref{eq:revision_rate_update_condition_03} approaches zero over time in the simulation. In all cases, the revision rates decrease over time, and we can observe the convergence of the population and game states to the equilibrium state, as we have discussed in Theorem~\ref{thm:convergence}. When the parameters $\gamma$ and $\tau$ are small, the revision rate decreases fast, resulting in the substantial overshoot in the game state trajectory, as observed in Fig.~\ref{fig:simulation_results_a}. As those parameters increase, the game state tends to have a steeper decrease in the early stage, as observed in Fig.~\ref{fig:simulation_results_b}. However, if these parameters are too large, the reduction of the revision rate becomes slow and convergence to the equilibrium takes longer, as illustrated in Fig.~\ref{fig:simulation_results_c}.

\section{Conclusion} \label{sec:conclusions}
We investigated a population game problem in which decision-making agents need to estimate the payoff vector for their strategy selections. By adopting task allocation games, we examined the effect of errors in the payoff vector estimation on the convergence to the equilibrium state. Our analysis showed how the adaptive rate of the agents' strategy revisions mitigates the impact of these errors on the convergence. Leveraging the analytical results, we proposed a design for time-varying strategy revision rates to ensure convergence. As a future direction of research, we plan to explore optimization of the time-varying revision rate design. Also, analyzing how the topology of the communication graph affects the convergence and investigating the optimal communication topology design are topics of our interest.

\appendix
\subsection{Passivity of Game Model and EDM} \label{sec:passivity}
The notions of $\delta$-antipassivity and $\delta$-passivity are adopted from \cite{9029756}.

\subsubsection{$\delta$-Antipassivity of Game Model \eqref{eq:task_allocation_games}} \label{sec:delta_antipassivity_tag}
We call \eqref{eq:task_allocation_games} \textit{$\delta$-antipassive} if there is a continuously differentiable non-negative function $\mathcal L: (0, q_{\text{max}})^n \times \mathbb X \to \mathbb R_+$ satisfying
\begin{subequations} \label{eq:delta_antipassivity_requirements_01}
  \begin{align}
    &\nabla_x \mathcal L(q,x) =  \mathcal F(q,x) - w \\
    &\nabla_q^T \mathcal L(q,x) \left( -\mathcal F(q,x) + w \right) \leq 0,
  \end{align}
\end{subequations}
and
\begin{align} \label{eq:delta_antipassivity_requirements_02}
  \mathcal L(q,x)=0 &\iff \mathcal F(q,x) = w \nonumber \\
  &\iff \nabla_q^T \mathcal L(q,x) \left( -\mathcal F(q,x) + w \right) = 0,
\end{align}
where $\mathcal F = (\mathcal F_1, \cdots, \mathcal F_n)$ and $w = (w_1, \cdots, w_n)$ are defined in \eqref{eq:task_allocation_games}. We refer to $\mathcal L$ as the \textit{$\delta$-antistorage function}.

Adopting the analysis presented in \cite{10383344}, we 
let
\begin{align} \label{eq:delta_antistorage_function}
  \mathcal L(q,x) = \sum_{i=1}^n \int_{x_i^\ast(q_i)}^{x_i} (\mathcal F_i(q_i, s) - w_i) \, \mathrm d s,
\end{align}
where $x_i^\ast: (0, q_{\text{max}}) \to \mathbb R_+$
is a function that maps the game state $q_i$ to an element in $\mathbb R_+$ satisfying $\mathcal F_i(q_i, x_i^\ast(q_i)) = w_i$. 
Note that by Assumption~\ref{assumption:function_F}, for a given positive $q_i$, there is a unique $x_i^\ast(q_i)$, and by the implicit function theorem, the function $x_i^\ast$ is continuously differentiable.


\begin{lemma} \label{lemma:delta_antipassivity}
  The function $\mathcal L$ in \eqref{eq:delta_antistorage_function} satisfies \eqref{eq:delta_antipassivity_requirements_01} and \eqref{eq:delta_antipassivity_requirements_02}.
\end{lemma}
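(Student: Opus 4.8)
The plan is to check, in order, (i) that $\mathcal L$ in \eqref{eq:delta_antistorage_function} is $C^1$ and non-negative on $(0,q_{\text{max}})^n\times\mathbb X$, (ii) the gradient identity (first line of \eqref{eq:delta_antipassivity_requirements_01}), (iii) the dissipation inequality (second line of \eqref{eq:delta_antipassivity_requirements_01}), and (iv) the equivalences \eqref{eq:delta_antipassivity_requirements_02}. Regularity is immediate: $\mathcal F_i\in C^1$ by Assumption~\ref{assumption:function_F}, and $x_i^\ast$ is $C^1$ by the implicit function theorem (as already noted after \eqref{eq:delta_antistorage_function}), so each summand of $\mathcal L$ is $C^1$. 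For the $x$-gradient, the fundamental theorem of calculus gives $\partial\mathcal L/\partial x_i=\mathcal F_i(q_i,x_i)-w_i$, the $i$-th component of $\mathcal F(q,x)-w$, which is the first requirement. Non-negativity follows because \eqref{eq:assumption_b_on_F} makes $s\mapsto\mathcal F_i(q_i,s)-w_i$ strictly increasing with its unique zero at $s=x_i^\ast(q_i)$; hence in both cases $x_i\gtrless x_i^\ast(q_i)$ the integral $\int_{x_i^\ast(q_i)}^{x_i}(\mathcal F_i(q_i,s)-w_i)\,\mathrm ds$ is strictly positive, and it vanishes exactly when $x_i=x_i^\ast(q_i)$. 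Thus $\mathcal L\ge 0$, with $\mathcal L(q,x)=0$ iff $x_i=x_i^\ast(q_i)$ for every $i$, i.e. iff $\mathcal F(q,x)=w$.

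The key computation is $\nabla_q\mathcal L$. Differentiating the $i$-th summand with the Leibniz rule yields
\[
\frac{\partial\mathcal L}{\partial q_i}(q,x)=\int_{x_i^\ast(q_i)}^{x_i}\frac{\partial\mathcal F_i}{\partial q_i}(q_i,s)\,\mathrm ds-\frac{\mathrm dx_i^\ast}{\mathrm dq_i}(q_i)\,\bigl(\mathcal F_i(q_i,x_i^\ast(q_i))-w_i\bigr),
\]
and the boundary term vanishes because $\mathcal F_i(q_i,x_i^\ast(q_i))=w_i$ by the definition of $x_i^\ast$. This cancellation is what makes the proof work: it leaves $\partial\mathcal L/\partial q_i=\int_{x_i^\ast(q_i)}^{x_i}\partial_{q_i}\mathcal F_i(q_i,s)\,\mathrm ds$, so that
\[
\nabla_q^T\mathcal L(q,x)\bigl(-\mathcal F(q,x)+w\bigr)=\sum_{i=1}^n\left(\int_{x_i^\ast(q_i)}^{x_i}\frac{\partial\mathcal F_i}{\partial q_i}(q_i,s)\,\mathrm ds\right)\bigl(w_i-\mathcal F_i(q_i,x_i)\bigr).
\]

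It remains to sign each term. Since $q_i>0$ on the domain, \eqref{eq:assumption_c_on_F} makes $\partial_{q_i}\mathcal F_i(q_i,s)>0$ for $s>0$, so the integral has the same sign as $x_i-x_i^\ast(q_i)$ and is zero when they coincide; meanwhile, by strict monotonicity in $x_i$, $w_i-\mathcal F_i(q_i,x_i)=\mathcal F_i(q_i,x_i^\ast(q_i))-\mathcal F_i(q_i,x_i)$ has the sign of $x_i^\ast(q_i)-x_i$. Hence each product is $\le 0$, giving the dissipation inequality, and is $0$ iff $x_i=x_i^\ast(q_i)$. Summing, $\nabla_q^T\mathcal L(-\mathcal F+w)=0$ iff $x_i=x_i^\ast(q_i)$ for all $i$ iff $\mathcal F(q,x)=w$ iff $\mathcal L(q,x)=0$ (by step (i)), which is \eqref{eq:delta_antipassivity_requirements_02}. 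I do not expect a genuine obstacle here: the argument is bookkeeping with the fundamental theorem of calculus, the Leibniz rule, and the three monotonicity properties of Assumption~\ref{assumption:function_F}. The one place to be careful is the \emph{vanishing of the Leibniz boundary term} (without it $\nabla_q\mathcal L$ would not simplify), together with the mild edge case $x_i=0<x_i^\ast(q_i)$, where one integrates the positive function $\partial_{q_i}\mathcal F_i(q_i,\cdot)$ over the interval $(0,x_i^\ast(q_i))$ of positive length, so the integral is still strictly negative and the sign analysis goes through unchanged.
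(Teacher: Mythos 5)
Your proof is correct and follows essentially the same route as the paper's: the fundamental theorem of calculus for $\nabla_x\mathcal L$, monotonicity of $\mathcal F_i$ in $x_i$ for non-negativity, and the sign comparison of $\int_{x_i^\ast(q_i)}^{x_i}\partial_{q_i}\mathcal F_i\,\mathrm ds$ against $w_i-\mathcal F_i(q_i,x_i)$ for the dissipation inequality and the equality cases. The only cosmetic difference is that the paper absorbs the Leibniz boundary cancellation into a rewriting $\mathcal L=\sum_i(f_i(q_i,x_i)-w_ix_i)$ with $f_i(q_i,x_i)=\int_{x_i^\ast(q_i)}^{x_i}\mathcal F_i(q_i,s)\,\mathrm ds+w_ix_i^\ast(q_i)$, whereas you apply the Leibniz rule directly and observe that the boundary term vanishes because $\mathcal F_i(q_i,x_i^\ast(q_i))=w_i$.
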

\begin{proof}
  Note that $\mathcal L$ can be re-written as
  \begin{align} \label{eq:antidelta_storage_function}
    \mathcal L(q,x)
    &= \sum_{i=1}^n (f_i(q_i, x_i) - w_i x_i),
  \end{align}
  where $f_i(q_i, x_i) = \int_{x_i^\ast(q_i)}^{x_i} \mathcal F_i(q_i, s) \, \mathrm ds + w_i x_i^\ast (q_i)$,
  from which we can infer that $\frac{\partial \mathcal L(q,x)}{\partial x_i} = \mathcal F_i(q_i, x_i) - w_i$.
  Since $s \geq x_i^\ast(q_i)$ if and only if $\mathcal F_i(q_i, s) \geq  w_i$, and $\mathcal F_i(q_i, s)$ is increasing in $s$, we can establish $\mathcal L(q,x) \geq 0$ where the equality holds if and only if $\mathcal F_i(q_i, x_i) = w_i$ for all $i$ in $\{1, \cdots, n\}$.
  
  Notice that
  \begin{align}
    \frac{\partial \mathcal L (q,x)}{\partial q_i} = \frac{\partial f_i(q_i, x_i)}{\partial q_i} = \int_{x_i^\ast(q_i)}^{x_i} \frac{\partial \mathcal F_i(q_i, s)}{\partial q_i} \, \mathrm ds,
  \end{align}
  and since $\mathcal F_i(q_i, x_i)$ is an increasing function of $q_i$, we can derive
  \begin{align}
    &\nabla_q^T \mathcal L(q,x) \left( -\mathcal F(q,x) + w \right) \nonumber \\
    &= \sum_{i=1}^n \frac{\partial \mathcal L (q,x)}{\partial q_i} \left( -\mathcal F_i(q_i, x_i) + w_i \right) \nonumber \\
    &= \sum_{i=1}^n \int_{x_i^\ast(q_i)}^{x_i} \frac{\partial \mathcal F_i(q_i, s)}{\partial q_i} \, \mathrm ds \, \left( -\mathcal F_i(q_i, x_i) + w_i \right).
  \end{align}
  Since $\mathcal F_i(q_i, x_i)$ is increasing in $x_i$ and $\mathcal F_i(q_i, x_i^\ast(q_i)) = w_i$, it holds that $\nabla_q^T \mathcal L(q,x) \left( -\mathcal F(q,x) + w \right) \leq 0$ where the equality holds if and only if $\mathcal F_i(q_i, x_i) = w_i$ for all $i$ in $\{1, \cdots, n\}$. This completes the proof.
\end{proof}

\subsubsection{$\delta$-Passivity of EDM \eqref{eq:edm}} \label{sec:delta_passivity_edm}
Let us define $\mathcal V_i(p,x) = \sum_{j=1}^n x_j \rho_{ji}(p) - x_i \sum_{j=1}^n \rho_{ij}(p)$. We refer to EDM \eqref{eq:edm} \textit{$\delta$-passive} if there is a continuously differentiable non-negative function $\mathcal S: \mathbb R^n \times \mathbb X \to \mathbb R_+$ satisfying 
\begin{subequations} \label{eq:delta_passivity_requirement_01}
  \begin{align}
    &\nabla_p \mathcal S(p,x) =  \mathcal V(p,x) \\
    &\nabla_x^T \mathcal S(p,x) \mathcal V(p,x) \leq 0,
  \end{align}
\end{subequations}
and
\begin{align} \label{eq:delta_passivity_requirement_02}
  \mathcal S(p,x)=0 &\iff \mathcal V(p,x)=0 \nonumber \\
  &\iff \nabla_x^T \mathcal S(p,x) \mathcal V(p,x) = 0,
\end{align}
where $\mathcal V = (\mathcal V_1, \cdots, \mathcal V_n)$.
We refer to $\mathcal S$ as the \textit{$\delta$-storage function}. The EDM defined using the Smith learning rule is $\delta$-passive and has the $\delta$-storage function given by $\mathcal S(p,x) = \frac{\varrho}{2} \sum_{i,j=1}^n x_j [p_i-p_j]_+^2$.

\subsection{Proof of Theorem~\ref{thm:convergence}} \label{sec:theorem_proof}
Recall that the agents are revising their strategy selections using their own payoff vector estimates. The state equation for the closed-loop system consisting of \eqref{eq:task_allocation_games} and \eqref{eq:edm_with_error_term} is given as
\begin{subequations} \label{eq:tag_edm}
  \begin{align}
    \dot q_i(t) &= - \mathcal F_i(q_i(t), x_i(t)) + w_i \label{eq:tag_in_closed_loop} \\
    \dot x_i(t) &= \lambda \mathcal V_i(p(t), x(t)) + \lambda \xi_i(t), \label{eq:edm_in_closed_loop}
  \end{align}
\end{subequations}
where $\mathcal V_i(p, x) = \sum_{j=1}^n x_j \rho_{ji}(p) - x_i \sum_{j=1}^n \rho_{ij}(p)$ and $p(t) = q(t)$. Since \eqref{eq:edm} is assumed to be $\delta$-passive, when $\xi_i(t) = 0, ~ \forall i \in \{1, \cdots, n\}$, \eqref{eq:edm_in_closed_loop} is $\delta$-passive and has the $\delta$-storage function $\mathcal S(p,x)$ as explained in Appendix~\ref{sec:delta_passivity_edm}. Also, according to Lemma~\ref{lemma:delta_antipassivity}, \eqref{eq:tag_in_closed_loop} is $\delta$-antipassive with the $\delta$-antistorage function $\mathcal L(q,x)$ as in \eqref{eq:antidelta_storage_function}.

We provide a two-part proof: in the first part, we show that $\mathcal L(q(t),x(t))$ becomes arbitrarily small as we select small $\lambda$, and in the second part, we argue that $\mathcal S(p(t),x(t))$ also becomes arbitrarily small. Then, based on \eqref{eq:delta_antipassivity_requirements_02} and \eqref{eq:delta_passivity_requirement_02}, we conclude that $(q(t), x(t))$ converges to the equilibrium $(q^\ast, x^\ast)$ as $\lambda$ becomes arbitrarily small.

We first evaluate the time-derivative of $\lambda \mathcal S(p(t), x(t)) + \mathcal L(q(t), x(t))$.
\begin{align} \label{eq:derivative_of_lyapunov_t}
  &\frac{\mathrm d}{\mathrm d t} \left( \lambda \mathcal S(p(t), x(t)) + \mathcal L(q(t), x(t)) \right) \nonumber \\
  &= \underbrace{\lambda \nabla_p^T \mathcal S(p(t), x(t))}_{=\dot x^T(t) - \lambda \xi^T(t)} \dot p(t) \nonumber \\
  &\qquad + \lambda \nabla_x^T \mathcal S(p(t), x(t)) \underbrace{\dot x(t)}_{=\lambda \mathcal V(p(t), x(t)) + \lambda \xi(t)} \nonumber \\
  &\qquad + \nabla_q^T \mathcal L(q(t), x(t)) \dot q(t) + \underbrace{\nabla_x^T \mathcal L(q(t), x(t))}_{=-\dot q^T(t)} \dot x(t), \nonumber \\
  &= \lambda \xi^T (t) \left( \lambda \nabla_x \mathcal S(p(t), x(t)) - \dot p(t) \right)  \nonumber \\
  &\quad + \underbrace{\lambda^2 \nabla_x^T \mathcal S(p(t), x(t)) \mathcal V(p(t), x(t))}_{\leq 0} \!+\! \underbrace{\nabla_q^T \mathcal L(q(t), x(t)) \dot q(t)}_{\leq 0},
\end{align}
where $\mathcal V = (\mathcal V_1, \cdots, \mathcal V_n)$ and $\xi = (\xi_1, \cdots, \xi_n)$. First, we note that since $q(t)$ and $x(t)$ are bounded, by the continuous differentiability of $\mathcal S$, we have that $\nabla_x \mathcal S(p(t), x(t))$ is bounded. Also, recall that $\xi (t)$ and $\dot p(t)$ are bounded. Hence, for any $\delta > 0$, we can select sufficiently small $\lambda$ for which $\lambda \xi^T (t) \left( \lambda \nabla_x \mathcal S(p(t), x(t)) - \dot p(t) \right) < \delta$ holds. Consequently, for sufficiently small $\lambda$, we can establish
\begin{align*}
  &\frac{\mathrm d}{\mathrm d t} \left( \lambda \mathcal S(p(t), x(t)) + \mathcal L(q(t), x(t)) \right) \nonumber \\
  &<\! \delta \!+\! \lambda^2 \nabla_x^T \mathcal S(p(t),\! x(t)) \mathcal V(p(t),\! x(t)) \!+\! \nabla_q^T \mathcal L(q(t),\! x(t)) \dot q(t).
\end{align*}
From the above inequality, we can observe that the function $\lambda \mathcal S(p(t), x(t)) + \mathcal L(q(t), x(t))$ decreases if $\lambda^2 \nabla_x^T \mathcal S(p(t), x(t)) \mathcal V(p(t), x(t)) + \nabla_q^T \mathcal L(q(t), x(t)) \dot q(t) < -2\delta$ holds. Since $\lambda \mathcal S(p(t), x(t)) + \mathcal L(q(t), x(t))$ is a non-negative function, for some $t > 0$,
it should hold that
\begin{multline} \label{eq:condition_for_entering_invariant_set}
  \lambda^2 \nabla_x^T \mathcal S(p(t), x(t)) \mathcal V(p(t), x(t)) \\ + \nabla_q^T \mathcal L(q(t), x(t)) \dot q(t) \geq - 2\delta.
\end{multline}
Otherwise, $\lambda \mathcal S(p(t), x(t)) + \mathcal L(q(t), x(t))$ eventually becomes negative. Let $\mathbb U_\delta$ be a set defined as
\begin{multline*}
  \mathbb U_\delta = \{ (q,x) \in [0, q_{\text{max}}]^n \times \mathbb X \,|\, \\ \nabla_q^T \mathcal L(q, x) (-\mathcal F(q,x) + w) \geq -2\delta \}.
\end{multline*}
Also, define a positive constant $\alpha$ using $\mathbb U_\delta$ as 
\begin{align*}
  \alpha = \sup_{(q,x) \in \mathbb U_\delta} \left( \lambda \mathcal S(p, x) + \mathcal L(q, x) \right).
\end{align*}
Notice that, by construction, the set $\{(q,x) \in [0, q_{\text{max}}]^n \times \mathbb X \,|\, \lambda \mathcal S(p, x) + \mathcal L(q, x) \leq \alpha\}$ is invariant, and according to \eqref{eq:condition_for_entering_invariant_set}, the trajectory $(q(t), x(t))$ always enters the set. Hence, we conclude that for some large $T_\lambda > 0$, it holds that $\mathcal L(q(t), x(t)) \leq \alpha, ~ \forall t \geq T_\lambda$.

Since $\alpha$ decreases as does $\lambda$, we can construct a class $\mathcal K$ function\footnote{See \cite[Chapter~4.4]{Khalil:1173048} for its definition.} $\kappa_1$ of $\lambda$ for which it holds that
\begin{align}
  \mathcal L(q(t), x(t)) < \kappa_1 (\lambda), ~ \forall t \geq T_\lambda.
\end{align}
Hence, by \eqref{eq:delta_antipassivity_requirements_02} and Assumption~\ref{assumption:estimation_convergence}, we can select small $\lambda$ to make $\limsup_{t \to \infty} \|\dot q(t)\|_2$ and $\limsup_{t \to \infty} \| \xi (t) \|_2$ arbitrarily small.

Now suppose, for given $\epsilon > 0$, we select $\lambda$ such that $\| \dot q(t) \|_2 < \epsilon, ~ \forall t \geq T_\lambda$ holds. Then, for each $i$ in $\{1, \cdots, n\}$, it holds that
\begin{align} \label{eq:bound_on_p_dot}
  \left| \mathcal F_i(q_i(t), x_i(t)) - w_i \right| < \epsilon.
\end{align}
if $q_i(t) < q_{\text{max}}$.
Otherwise $q_i(t) = q_{\text{max}}$.

By the implicit function theorem and Assumption~\ref{assumption:function_F}, there exists a continuously differentiable function $\bar q_i = g_i(x_i)$ for which $\mathcal F_i(\bar q_i, x_i) = w_i$ holds if such $\bar q_i \in [0, q_{\text{max}}]$ exists for a given $x_i \in [0, 1]$. We then extend the domain of the function $g_i$ to the entire $[0, 1]$ by setting $q_{\text{max}} = g_i(x_i)$ if $\mathcal F_i(q_{\text{max}}, x_i) < w_i$.\footnote{Note that the extended $g_i$ is continuously differentiable except at $x_i$ for which $\mathcal F_i(q_{\text{max}}, x_i) = w_i$ holds.}
Note that $g_i$ is a non-increasing function of $x_i$, i.e., $\frac{\partial g_i}{\partial x_i} \leq 0$.
Consequently, by Assumption~\ref{assumption:function_F}, if $\epsilon$ is sufficiently small,
$q_i(t)$ lies in a small neighborhood of $g_i(x_i(t))$.

Let us define
\begin{align}
  \mathcal G(x) =
  \begin{pmatrix}
    g_1(x_1) &
    g_2(x_2) &
    \cdots &
    g_n(x_n)
  \end{pmatrix}^T,
\end{align}
and let $\zeta(t) = q(t) - \mathcal G(x(t))$.
Eq. \eqref{eq:edm_in_closed_loop} can be re-written as
\begin{align}
  \dot x_i(t)
  &= \lambda \mathcal V_i(\mathcal G(x(t)) + \zeta(t), x(t)) + \lambda \xi_i(t) \nonumber \\
  &= \lambda \mathcal V_i(\mathcal G(x(t)), x(t)) + \lambda \bar{\xi}_i(t),
\end{align}
where $\bar{\xi}_i(t)$ is defined as
\begin{align*}
  \bar{\xi}_i(t)
  &= \xi_i(t) \!+\! \sum_{j=1}^n x_j(t) \big( \rho_{ji}(\mathcal G(x(t)) \!+\! \zeta(t)) \!-\! \rho_{ji}(\mathcal G(x(t))) \big) \nonumber \\
  &\quad - x_i(t) \sum_{j=1}^n \big( \rho_{ij}(\mathcal G(x(t)) + \zeta(t)) - \rho_{ij}(\mathcal G(x(t))) \big).
\end{align*}
By the Lipschitz continuity of $\rho_{ji}$, it holds that
\begin{align*}
  &\Bigg| \sum_{j=1}^n x_j(t) \big( \rho_{ji}(\mathcal G(x(t)) + \zeta(t)) - \rho_{ji}(\mathcal G(x(t))) \big) \nonumber \\
  &\quad - x_i(t) \sum_{j=1}^n \big( \rho_{ij}(\mathcal G(x(t)) + \zeta(t)) - \rho_{ij}(\mathcal G(x(t))) \big) \Bigg| \nonumber \\
  &\leq c \left\| \zeta(t) \right\|_2.
\end{align*}
Therefore, if $\lambda$ is sufficiently small, both $\|\xi(t)\|_2$ and $\|\zeta(t)\|_2$ eventually become arbitrarily small, as does $|\bar \xi_i(t)|$.

Now, we evaluate the time derivative of $\mathcal S(\mathcal G(x(t)), x(t))$ as follows.
\begin{align}
  &\frac{\mathrm d}{\mathrm d t} \mathcal S(\mathcal G(x(t)), x(t)) \nonumber \\
  &=\!\nabla_p^T\! \mathcal S(\mathcal G(x(t)), \!x(t)) \nabla \mathcal G(x(t)) \dot x(t) \!+\! \nabla_x^T\! \mathcal S(\mathcal G(x(t)), \!x(t)) \dot x(t) \nonumber \\
  &\leq \!\lambda \left( \mathcal V^T (\mathcal G(x(t)), x(t)) \nabla \mathcal G(x(t)) \!+\! \nabla_x^T \mathcal S(\mathcal G(x(t)), x(t)) \right) \bar{\xi}(t) \nonumber \\
  &\qquad + \lambda \nabla_x^T \mathcal S(\mathcal G(x(t)), x(t)) \mathcal V(\mathcal G(x(t)), x(t)),
\end{align}
where we used \eqref{eq:delta_passivity_requirement_01} and the fact that $\nabla \mathcal G(x)$ is a negative semi-definite matrix for every $x \in \mathbb X$. Since $\bar \xi(t) = (\bar \xi_1 (t), \cdots \bar \xi_n (t))$ eventually vanishes as $\lambda$ decreases, using the same arguments as in the first part of the proof, we can conclude that there is a $\mathcal K$ class function $\kappa_2$ of $\lambda$ for which
\begin{align}
  \mathcal S(\mathcal G(x(t)), x(t)) < \kappa_2(\lambda), ~ \forall t \geq T_{\lambda}'
\end{align}
holds for some $T_\lambda' > 0$. Therefore, we conclude that by decreasing $\lambda$, we can guarantee the convergence of $(q(t), x(t))$ to $(q^\ast, x^\ast)$. This completes the proof.

\balance
\bibliographystyle{IEEEtran}
\bibliography{IEEEabrv,references}

\end{document}